\newcommand{\PbOpt}[3]{%
\begin{center}
  \begin{tabular}{|l|}%
  \hline
    %\shadowbox{%
    \begin{minipage}[c]{.95\textwidth}
    \smallskip%
      \par\noindent%
      \shadowbox{#1}%
      %#1:
      %#1:
      %\medskip%
      \par\noindent%
      $\bullet$
      \textbf{\textsf{Input}}: #2%
      %\medskip
      \par\noindent%
      $\bullet$
      \textbf{\textsf{Output}}: #3%
      %\medskip
      %\par\noindent%
      %$\bullet$
      %\textbf{\textsf{Measure}}: #4%
      \smallskip%
      \par\noindent%
    \end{minipage}
   % }% end shadowbox
    \\\hline
  \end{tabular}%
\end{center}
}%
\newcommand{\pSC}{\textsc{max $k$-set cover}\xspace}
\newcommand{\SC}{\textsc{min set cover}\xspace}
\newcommand{\DS}{\textsc{min dominating set}\xspace}
\newcommand{\pSAT}{\textsc{max sat-$k$}\xspace}
\newcommand{\mSAT}{\textsc{max sat}\xspace}
\newcommand{\SAT}{\textsc{sat}\xspace}
\newcommand{\pVC}{\textsc{max $k$-vertex cover}\xspace}
\newcommand{\VC}{\textsc{min vertex cover}\xspace}
\newcommand{\BNTMC}{\textsc{bounded non-deterministic Turing machine computation}\xspace}
\newcommand{\algone}{\texttt{ALG1}}
\newcommand{\kscalg}{\texttt{kSC-ALG}}
\newcommand{\pscschema}{\texttt{pSC-IMPROVED}\xspace}
\newcommand{\algpSATk}{\texttt{pSAT-k}\xspace}
\let\leq\leqslant
\let\geq\geqslant
\let\le\leqslant
\def\@fnsymbol#1{\ensuremath{\ifcase#1\or (a)\or (b)\or (3)\or (4)\or \S \or * \or
   \mathsection\or \mathparagraph\or \|\or **\or \dagger \or \ddagger \or \dagger\dagger
   \or \ddagger\ddagger \else\@ctrerr\fi}}
\title{\textbf{Parameterized Exact and Approximation Algorithms for Maximum $k$-Set Cover and Related Satisfiability Problems}}
\author{\'Edouard Bonnet$^1$,
%\footnote{Author's current adress:  Institute for Computer Science and Control, Hungarian Academy of Sciences, Budapest, Hungary} 
Vangelis~Th.~Paschos$^2$, Florian Sikora$^2$}
\institute{Institute for Computer Science and Control,
Hungarian Academy of Sciences (MTA SZTAKI) \\
\email{bonnet.edouard@sztaki.mta.hu} 
\and Universit\'{e} Paris-Dauphine, PSL Research University, CNRS, LAMSADE, Paris, France \\
\email{\{paschos,florian.sikora\}@lamsade.dauphine.fr}}
\begin{document}

\maketitle

\protect\thispagestyle{plain}

\begin{abstract}

%Given a family of subsets $\mathcal S=\{S_1,\ldots,S_m\}$
%over a set of elements $X=\{x_1,\ldots,x_n\}$ and an integer $p$, consists  of finding a set~$\mathcal T$ of at most~$k$ subsets covering at least~$p$ elements. We settle the parameterized complexity of \pSC{} under to several parameters as~$p$, $k$, $\Delta=\max_i\{|S_i|\}$ and $f=\max_i|\{j | x_i \in S_j\}|$, where the quantity$|\{j | x_i \in S_j\}|$ is commonly called the frequency of element~$x_i$. Then, we use results for \pSC{} to settle some optimum satisfiability problems 

%We study the complexity of several parameterizations for \pSC{}. 
Given a family of subsets $\mathcal S$
%$\mathcal S=\{S_1,\ldots,S_m\}$ 
over a set of elements~$X$ and two integers~$p$ and~$k$,
%$X=\{x_1,\ldots,x_n\}$ 
\pSC{} consists of finding a subfamily~$\mathcal T \subseteq \mathcal S$ of cardinality at most~$k$, covering at least~$p$ elements of~$X$. 
This problem is W[2]-hard when parameterized by~$k$, and FPT when parameterized by $p$. 
%Here, we settle the multiparameterized complexity of \pSC{} under pairs of parameters as~$(k,\Delta)$ and~$(k,f)$, where $\Delta=\max_i\{|S_i|\}$ and $f =\max_i|\{j | x_i \in S_j\}|$. 
We investigate the parameterized approximability of the problem with respect to parameters~$k$ and~$p$.
Then, we show that \pSAT{}, a satisfiability problem generalizing \pSC{}, is also FPT with respect to parameter~$p$.
%We use a result on an enhanced weft hierarchy to show some tight membership, and therefore completeness for our multiparameterized problems.   
%Finally, we sketch an enhancement of the classes of the~W[$\cdot$] hierarchy that seems more appropriate for showing completeness of cardinality constrained W[$\cdot$]-hard problems.
 
\end{abstract}

\section{Introduction}\label{intro}

In the \pSC{} problem, we are given a family of subsets $\mathcal S=\{S_1,\ldots,S_m\}$ over a set of elements $X=\{x_1,\ldots,x_n\}$, and two integers~$p$ and~$k$. 
The goal is to find a subcollection~$\mathcal T$ of at most~$k$ subsets that covers at least~$p$ elements. In what follows, we make the following two natural hypotheses for the instances of \pSC{}: (a)~$S_i \neq S_j$, $i,j = 1, \ldots, m$ and (b)~$S_i \nsubseteq S_j$, $i,j = 1, \ldots, m$.

%, called \textit{solution} in what follows, 
%\todo[inline]{dire blabla c'est un probleme important theorie et pratique blabla ?}
\pSC{} is a well-known problem met in many real-world applications. 
To the best of our knowledge, it has been studied for the first time in the late seventies by Cornuejols et al.~\cite{cfn}.
%, in~1977, 
This combinatorial problem originated from a financial application, where one wishes to find an optimal location of bank accounts in order to maximize clearing time. 
Since then, it is used for modeling real problems met in several areas such as databases, social networks, sensor placement, information retrieval, etc. 
A non-exhaustive list of references to such applications can be found in Badanidiyuru and al.~\cite{DBLP:conf/compgeom/BadanidiyuruKL12}.

\pVC{}, the graph version of \pSC{} is defined as follows: given a graph $G=(V,E)$ and two integers~$k$ and~$p$, one wants to determine~$k$ vertices that cover at least~$p$ edges. 
%Indeed, \pVC{} is the subset of instances of \pSC{} where any ground element of~$X$ belongs to exactly two sets of~$\mathcal{S}$.
\pVC is a special case of \pSC where any element of~$X$ belongs to exactly two sets of~$\mathcal{S}$.

Both \pSC{} and \pVC{} are very important problems, since they are natural generalizations of \SC{} and \VC{}, respectively. Both are NP-hard  (setting $p=n$, \pSC{} becomes the seminal \SC{} problem; setting $p = |E|$, \pVC{} coincides with the \VC{} problem).
%Furthermore, \pSC{} is also significant from a practical point of view, since it arises quite frequently in several areas, for instance, in several location problems, in particular when resources location is needed to perform a maximum coverage but the number of resources is restricted.

%\pSC{} is known to be NP-hard (setting $p=n$, \pSC{} becomes the seminal \SC{} problem).  
\pSC{} is known to be approximable within a factor $1-\nicefrac{1}{e}$ (to our knowledge, it is the only polynomial approximation result known for \pSC on general instances) but, for any $\epsilon > 0$, no polynomial algorithm can approximate it within ratio $1-\nicefrac{1}{e} + \epsilon$  unless $\mathrm{P} = \mathrm{NP}$~\cite{feigescj}, while \SC{} is polynomially inapproximable within ratio~$(1-\epsilon)\ln{n}$, for arbitrarily small $\epsilon > 0$, unless $\mathrm{P} = \mathrm{NP}$~\cite{DBLP:conf/approx/Moshkovitz12}.  On the other hand, \pVC{}, is APX-hard and the best-known approximation ratio for this problem, in general graphs, is bounded below by~$\nicefrac{3}{4}$, obtained by a very smart linear programming method by Ageev et al.~\cite{ageev}.

\pSAT is a satisfiability problem closely related to \pSC{}.
It is also a natural ``fixed cardinality generalization'' of \textsc{max sat}. 
In \pSAT{}, we are given a CNF on~$n$ variables and~$m$ clauses and we ask for setting to \textit{true} at most~$k$ variables satisfying at least~$p$ clauses. 
One may observe that \pSAT{} without negation is \pSC{}.

The goal of the paper is to establish several parameterized results for \pSC{} and for \pSAT{}. 
We mainly study the parameterized approximability of the former and exact parameterization of the latter. 
%which is a resembling satisfiability problem defined in Section~\ref{satsec}.

\section{Preliminaries}

We first give the basic definitions of the parameterized complexity theory. 
%\textbf{Parameterized complexity.}
A parameterized problem~$(\Pi,k)$ is said \textit{fixed-parameter tractable} (or in the class~FPT) with respect to a parameter~$k$ if it can be solved by an algorithm with running time $f(k)\cdot|I|^{O(1)}$ time (in \textit{fpt-time}), where~$f$ is some computable function and~$|I|$ is the instance size. 
Such algorithms are called \textit{fixed-parameter tractable algorithms}, or FPT algorithms.
A parameterized reduction (or FPT reduction) from a problem $\Pi_1$ to a problem $\Pi_2$ is a mapping of an instance~$(I,k)$ of~$\Pi_1$ to an instance~$(I',k')$ of~$\Pi_2$, computable in time $f(k)\cdot |I|^{O(1)}$, such that $(I,k) \in \Pi_1 \Leftrightarrow (I',k')\in \Pi_2$, $k' \le g(k)$, and $|I'| \leqslant h(k)\cdot |I|^{O(1)}$ for some computable functions~$f$, $g$, and $h$.
This seemingly technical definition is just tailored to ensure that if $\Pi_2$ is in FPT and there is an FPT reduction from $\Pi_1$ to $\Pi_2$, then $\Pi_1$ is also in FPT.  
Some problems such as \textsc{Clique} parameterized by the solution size are not in FPT.
In fact, there is a whole hierarchy of classes beyond FPT: FPT $\subseteq$ W[1] $\subseteq$ W[2] $\subseteq \dots \subseteq$ W[P] $\subseteq$ XP. 
It is commonly believed that FPT $\neq$ W[1].

We need some additional definitions in order to give a precise meaning to those classes. 
A \textit{boolean circuit} is a directed acyclic graph where every vertex of in-degree~0 is an \textit{input vertex}, every vertex of in-degree~1 is a \textit{negation vertex} and every vertex of in-degree greater than~2 is either an \textit{and-vertex} or an \textit{or-vertex}. 
Exactly one vertex with out-degree 0 is the \textit{output vertex}. 
The \textit{depth} of such a circuit is the maximum length of a path from an input vertex to the output vertex, and the \textit{weft} of such circuit is the maximum number of \emph{large} vertices on a path from an input vertex to the output vertex.
A vertex is \emph{large} if its in-degree exceeds some pre-agreed constant bound. 
Giving boolean values to the input vertices determines the value of every vertex in the classic way, and in particular, if the output vertex receive value true for a given assignment, we say that this assignment satisfies the circuit.

The \textsc{weighted circuit satisfiability}~(WCS) problem takes as input a boolean circuit and an integer $k$ and decides if there is a satisfying assignment for this circuit with exactly $k$ input vertices set to true.
A parameterized problem~$\Pi$ belongs to the class~W[$t$], $t \geq 1$, if there is an FPT reduction from~$\Pi$ to~WCS restricted to circuits of weft at most~$t$.
A parameterized problem~$\Pi$ is hard for the class~W[$t$] (with $t \geq 1$) if, for any problem $\Pi'$ in W[$t$], there is an FPT reduction from $\Pi'$ to~$\Pi$; or equivalently if there is an FPT reduction from WCS restricted to circuits of weft at most~$t$ to $\Pi$.
A parameterized problem~$\Pi$ is W[$t$]-complete if it is W[$t$]-hard and in~W[$t$].
For example, \textsc{max independent set} parameterized by the size of the solution is W[1]-complete and \textsc{min dominating set} parameterized by the size of the solution is W[2]-complete.
The class~W[P] contains problems reducible to~WCS without constraints on the weft of the circuits.
The class~XP contains problems solvable in time~$|I|^{f(k)}$, where~$f$ is any computable function. 
See for example the monograph of Downey and Fellows for more details about fixed-parameter tractability~\cite{dowfel}.

We now go back to our problems. 
\pSC{} is W[2]-hard for the parameter~$k$ since, by setting $p=n$, we obtain an instance of \SC{} which is W[2]-hard. 
%\cite{dowfel} (for more about the definition of the~W[$\cdot$] hierarchy, as well as for everything about the foundations of parameterized complexity, the reader is referred to~\cite{dowfel}). 
An FPT algorithm with respect to the standard parameter~$p$ is given by Bl\"{a}ser~\cite{blaserpartialcover}. 
Let us note that recently and independently certain aspects of parameterized complexity of \pSC have also been studied by Skowron and Faliszewski~\cite{AAAI159674}. 
These results are summarized in Table~\ref{table:complSC}.
%However, no overlap between the two papers exists.

%To the best of our knowledge, no multiparameterization results exist about this problem, and this is the main goal of this paper.

%Although parameterized complexity studies regarding a single parametrization are made for numerous problems, studies with respect to combinations of parameters are rather more rare and relatively recent~\cite{multiparameteralgorithmica,Cai06randomseparation}. 

Note that different parameterizations are possible for the same problem. 
In fact, one can also parameterize a problem by a combination of parameters (instead of just one parameter).
A \emph{multiparameterization} by parameters $k_1, k_2, \ldots, k_h$ consists of taking $k_1+k_2+ \ldots+ k_h$ as parameter.   
Multiparameterization poses novel and interesting open questions. 
For \pSC{}, for instance, several natural parameters as~$p$ (commonly called the ``standard parameter''),~$k$, $\Delta=\max_i\{|S_i|\}$ and $f=\max_i|\{j | x_i \in S_j\}|$ (commonly called the maximum frequency), can be jointly involved in a complexity study of the problem. 

We first give multiparameterization of \pSC{} with respect parameters~$k$ and~$p$ (Section~\ref{psc}). 
%Table~\ref{tableone} summarizes the main complexity results of the paper for \pSC. 
%both and \pSAT. %k+f not for sat-k...
%Informally, if the analysis involves, for instance,~$k$ and~$\Delta$, the basic idea is the following. Perform a branching with respect to a set chosen upon some greedy criterion. For instance, this criterion could be to consider some set~$S$ that maximizes the number of ground elements covered in the solution under construction. Without branching, such a greedy criterion is not optimal. However, if at each step either the greedily chosen set~$S$, or some other set satisfying some other problem-specific criterion is a good choice (i.e., it is in an optimal solution), then a branching rule on these sets leads to a branching tree whose size is, for instance, bounded by a function of~$k$ and~$\Delta$, and at least one leaf of which is an optimal solution. 
The most important part of this section is Subsection~\ref{approxissues} dedicated to the study of the parameterized approximation of \pSC{} for these two parameters. Consider a problem~$\Pi$, parameterized by some parameter~$\pi$. Then, we say that~$\Pi$ is {\em parameterized $r$-approximable}  if there exists an algorithm~\texttt{A} that is FPT when parameterized by~$\pi$ such that:
\begin{itemize}
\item if~$\Pi$ is a minimization problem, then for any instance~$I$ of~$\Pi$ where $\pi \leqslant \beta$,~\textsc{A} produces a solution with value at most~$r\beta$; otherwise it returns any solution (which can be smaller or greater than~$r\beta$);
\item if~$\Pi$ is a maximization problem, then for any instance~$I$ of~$\Pi$ where $\pi \geqslant \beta$,~\textsc{A} produces a solution with value at least~$r\beta$; otherwise it returns any solution (which can be smaller or greater than~$r\beta$).
\end{itemize} 
%a solution of size at most (at least, respectively)~$r\pi$ whenever the input instance has a solution of size at most (at least, respectively)~$\pi$; otherwise, it returns any solution (smaller or greater than~$r\pi$). 
This line of research was initiated by three independent works~\cite{dofemcciwpec,caihuiwpec,chgrogruiwpec}. For an excellent overview, see the survey of Marx~\cite{marx-approx}. 
It aims at beating polynomial approximation barriers by offering more generous running time. 
The underlying question motivating Subsection~\ref{approxissues} is \emph{to what extent parameterized approximation is able to do it for \pSC{}?}.

Skowron and Faliszewski show, in~\cite{AAAI159674}, that it is possible when the parameter considered is $k+f$, where~$f$ is the frequency of the \pSC-instance, i.e., the maximum number of sets in~$\mathcal{S}$, a ground element belongs to. 
But what happens when considering only~$k$ instead?

For parameter~$k$ we mainly show a conditional result, informally, a parameterized (with respect to~$k$) approximation of \pSC within ratio greater than $1 - \nicefrac{1}{e} + \epsilon$, for some $\epsilon >0$, would lead to a parameterized approximation of \SC (with respect to the standard parameter) within ratio $(1-\varepsilon)\ln n$, for some fixed $\varepsilon > 0$. Even if this is a conditional result, we conjecture that the right answer is negative, i.e., that \pSC{} is inapproximable within ratio greater than $1 - \nicefrac{1}{e} + \epsilon$, for any $\epsilon >0$, in FPT time parameterized by~$k$. We also give in Subsection~\ref{approxissues} a weaker negative result for \pSC{}, namely, that under the same parameterization, it is inapproximable within ratio $1 - (\nicefrac{1}{n})^{\sqrt[4]{\ln{n}}}$, unless ETH\footnote{\textit{Exponential Time Hypothesis}: 
%that for any integer $k \geqslant 3$, there exists a real number $s_k > 0$ such that no algorithm exists that solves \textsc{$k$-sat} in time $O(2^{s_kn})$ on  formulas with~$n$ variables).
there is a real number $\delta>0$ such that \textsc{3-sat} is not solvable in $O(2^{\delta n})$ on instances with $n$ variables.} fails. 

Let us mention that \pVC{} can be approximately solved within ratio $1 - \epsilon$, for any fixed $\epsilon > 0$, in FPT time parameterized by~$k$~\cite{marx-approx}.

For parameter~$p$, we show that \pSC can be solved within ratio (strictly) greater than $1 - \nicefrac{1}{e}$ in time FPT parameterized by~$p$, which is smaller than that needed for the exact solution of the problem. 
 
%Finally let us note that in~\cite{maxkvcjoco}, a restriction of \pSC{}, the \textsc{max k-vertex cover} problem is studied. In this problem, we are given a graph~$G$ and a constant~$k$ and the objective is to determine a set of~$k$ vertices of~$G$ that cover a maximum number of edges.

%A careful reader of the parameterized complexity literature will surely remark %that although there exist a lot of hardness results for several classes of the~%W[$\cdot$] hierarchy there clearly exist much less completeness results. 
In Section~\ref{satsec}, we settle the parameterized complexity of \pSAT{}, where, given a CNF on~$n$ variables and~$m$ clauses, one asks for setting to \textit{true} at most~$k$ variables satisfying at least~$p$ clauses.
The main result is that \pSAT{} is FPT with respect to parameter $p$.

To prove that, we refine a technique for obtaining multiparameterized FPT algorithms developed by Bonnet et al.~\cite{multiparameteralgorithmica}, called \emph{greediness-for-parameterization}, which is based on \emph{branching} algorithms.
Roughly, a branching algorithm extends a \emph{partial} solution at each recursion step.
The execution of such an algorithm can be seen as a \emph{branching tree}.
The best among the \emph{complete} solutions at the leaves of the branching tree, is output. 
The basic idea of the technique is to branch on:
\begin{itemize}
\item a greedy extension of the partial solution;
\item other extensions in the neighborhood of the greedy extension.
\end{itemize}
The soundness of the algorithm lies on the fact that if none of the above extensions of the partial solution is done by a supposed optimal solution, then the greedy choice stays optimal at the end.
%All these terms and ideas are defined more precisely in Section~\ref{exact} and the technique greediness-for-parameterization is applied with some refinements for \pSC{} in Section~\ref{exact} and for \pSAT{} in Section~\ref{satsec}.
Although the techniques are not the same, greediness-for-parameterization shares some common points with the \textit{greedy localization} technique (see~\cite{chenetalgreedylocalizationfsttcs01,dehneetalliwpec04,liuetaliwpec06} for some applications). Here, one uses a local search approach: one starts from a computed approximate solution and turns it to an optimal solution. 
However, greedy localization technique is less general than greediness-for-parameterization, since it suits maximization problems only.  
%\begin{small}
%\begin{table}[t!]
%\begin{tabular}{c||c|c|c|c}
%\hline \hline
% {Parameter} &  $\max\{\Delta,f\}$ & $k$ & $\max\{k,f\}$ & $\max\{k,\Delta\}$ \& $p$ \\ 
%\hline 
%{Complexity} 	& $\notin$ XP & W[2]-hard, in W[P] & W[1]-hard & {FPT} \\ 
%%\hline \textbf{Param. Approx?} 		&  &  &  &  &  &    \\ 
%\hline \hline
%\end{tabular} 
%\caption{Our main parameterized complexity results}\label{tableone}
%\end{table}
%\end{small}
\begin{comment}
\begin{table}[t!]
\begin{center}
\begin{tabular}{c||c|c|c|c}
\hline \hline
 \textbf{Parameter} &  $\Delta+f$ & $k$ & $k+f$ & ($k+\Delta$) \& $p$ \\ 
\hline 
\textbf{Complexity} 	& $\notin$ XP & W[2]-hard, in W[P] & W[1]-hard, in W[P] & {FPT} \\ 
%\hline \textbf{Param. Approx?} 		&  &  &  &  &  &    \\ 
\hline \hline
\end{tabular} 
\end{center}
\caption{Our main parameterized complexity results}\label{tableone}
\end{table}
\end{comment}

In Section~\ref{weft}, we suggest an enhanced weft hierarchy called \emph{counting weft hierarchy} dedicated to those cardinality-constrained problems, such as \pSAT{} and \pSC{} which are W[$i$]-hard for some $i$, and in~W[P], but not even known to be in~W[$j$] for some integer~$j$.
Related issues have been discussed by Fellows et al.~\cite{fellows10}. 

%Due to limits on the size of the paper, some of the results are given without proofs. Those omitted proofs can be found in the appendix.

\section{Parameterizations for \pSC{}}\label{psc}

\begin{comment}
We say that a set \emph{covers} an element if the element is contained in the set.
\PbOpt{\pSC{}}{A set $\mathcal S$ of subsets of $X$, two integers $k$ and $p$.}{$k$ subsets of $\mathcal S$ which cover at least $p$ elements of $X$.}
\end{comment}

\subsection{Exact Parameterization}\label{exact}

As mentioned in Section~\ref{intro}, in the \pSC{} problem, we are given a family of subsets $\mathcal S=\{S_1,\ldots,S_m\}$ over a set of elements $X=\{x_1,\ldots,x_n\}$, and two integers~$p$ and~$k$.
The goal is to find a subcollection of $\mathcal S$ of size $k$ that covers at least $p$ elements of $X$.

Let us first note that, as $p \leqslant \Delta k$, the FPT result for \pSC{} with respect to parameter~$p$ presented by Bl\"{a}ser~\cite{blaserpartialcover}, immediately implies that this problem is also~FPT when parameterized by $k + \Delta$. %Indeed,  this is also an FPT algorithm with respect to $\Delta$ and $k$ combined. 
An alternative proof using greediness-for-parameterization is given by Bonnet et al.~\cite{DBLP:journals/corr/BonnetPS13}. It might be worth reading it since it is a good introduction to the FPT algorithm for \pSAT{} in Section~\ref{satsec}.

%We already mentioned in the introduction that \textsc{Set Cover} is \pSC{} by setting $p=n$.
%Since \textsc{Set Cover} is W[2]-hard for parameter size of the solution~\cite{niedermeier06}, \pSC{} is too for parameter $k$.
We now explain why \pSC{} parameterized by $k+f$ is W[1]-hard. 
Each instance~$(\mathcal{S},X)$ of \pSC{} such that $f=2$ (that is, each element appears in at most two sets) can be seen as a graph whose vertices are the sets in~$\mathcal{S}$, and where there is an edge between two vertices if the corresponding sets share at least one element. 
Therefore \pSC{} with frequency~2 is equivalent to the \pVC{} problem where, given a graph~$G$ and a number~$k$, the goal is to cover at least~$p$ edges with~$k$ vertices. 
Thus, \pVC, W[1]-hard with respect to~$k$~\cite{cai}, is a restricted case of \pSC{}.

Note that in the reduction above the maximum set-cardinality~$\Delta$ in an instance of \pSC{} with $f=2$, coincides with the maximum degree of the derived graph. Hence, with the same argument, it can be shown that \pSC{} is not in XP when parameterized by~$\Delta+f$, since \pVC{} is NP-hard even in graphs with bounded degree (being, as mentioned above, a generalization of \VC that remains NP-hard even in these graphs).

Finally, in the following proposition, we prove that \pSC{} parameterized by~$k$ belongs to~W[P].
\begin{proposition}\label{prop3}
\pSC{} parameterized by~$k$ belongs to~W[P].
\end{proposition}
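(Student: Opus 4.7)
The plan is to invoke the standard nondeterministic Turing machine characterization of W[P] (see~\cite{dowfel}): a parameterized problem belongs to W[P] iff it is decidable by a nondeterministic algorithm that runs in time $f(k)\cdot|x|^{O(1)}$ and performs at most $g(k)\log|x|$ nondeterministic binary steps, for some computable functions $f$ and $g$. Under this characterization, placing \pSC{} parameterized by $k$ in W[P] amounts to exhibiting a suitable guess-and-check procedure.

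I would design the following nondeterministic algorithm on input $(\mathcal S=\{S_1,\dots,S_m\}, X, k, p)$: successively guess $k$ indices $i_1,\dots,i_k \in \{1,\dots,m\}$, each encoded on $\lceil\log_2 m\rceil$ nondeterministic bits; then deterministically compute the union $U = S_{i_1}\cup\cdots\cup S_{i_k}$ and accept iff $|U|\geqslant p$. Soundness and completeness are immediate from the definition of \pSC: the algorithm accepts on some nondeterministic branch iff there exist $k$ subsets of $\mathcal S$ covering at least $p$ elements.

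For the accounting, the total number of nondeterministic bits used along any computation branch is $k\lceil\log_2 m\rceil$; since $m$ is at most the input size $n$, this is bounded by $g(k)\log n$ with $g(k)=k+1$. The deterministic verification (computing the union and comparing its cardinality to $p$) clearly runs in polynomial time in $|x|$. Both budgets match the W[P] template, hence $(\pSC,k)\in\mathrm{W[P]}$.

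I do not foresee any substantive obstacle: the proof is essentially a textbook application of the nondeterministic characterization of W[P]. The only minor care needed is to phrase the guess so that each index is encoded in $O(\log n)$ bits (rather than guessing a subset of $\mathcal S$ naively with $m$ bits), so that the total nondeterministic budget is of the required form $g(k)\log n$ rather than polynomial in $n$.
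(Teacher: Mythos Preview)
Your proof is correct and follows essentially the same approach as the paper: guess $k$ set indices nondeterministically and verify coverage deterministically. The only cosmetic difference is that the paper phrases this as an FPT-reduction to \BNTMC{} (guessing each of the $k$ sets in one nondeterministic step over a polynomial-size alphabet), whereas you invoke the equivalent $g(k)\log n$-bit nondeterministic characterization of W[P] directly.
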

\begin{proof}
The proof is in exactly the same spirit with the proof by Cesati~\cite{Cesati03}. We reduce \pSC{} to \BNTMC{} which is a known W[P]-complete problem~\cite{dowfel} and defined as follows. Given a non-deterministic Turing machine~$M$, an input word~$w$, an integer~$n$ encoded in unary and a positive integer~$k$, does~$M(w)$ non-deterministically accept in at most~$n$ steps and using at most~$k$ deterministic steps?

%\PbOpt{\BNTMC{}}{A nondeterministic Turing machine $M$, an input word $w$, an integer $n$ encoded in unary, a positive integer $k$}{Decide if $M(w)$ nondeterministically accept in at most $n$ steps and using at most $k$ nondeterministic steps.}
Let $\mathcal I = (\mathcal S=\{S_1,\ldots,S_m\},p)$ be an instance of \pSC{}. Build a Turing Machine~$M$ with three tapes~$T_1$, $T_2$ and~$T_3$. Tape~$T_1$ is dedicated to non-deterministic guess. Write there the~$k$ sets $S_{a_1},\ldots,S_{a_k}$. Then, the head of~$T_1$ runs through all the elements and when a new element is found it is written down on the second tape. The third tape counts the number of already covered elements. If this number reaches~$p$, then~$M$ accepts. Thus, there exist~$k$ non-deterministic steps, and a polynomial (in $|\mathcal I|$) number of deterministic steps (precisely,~$O(|\mathcal I|^2)$).~\qed
\end{proof}

The results mentioned in this paragraph as well as literature results are summarized in Table~\ref{table:complSC}.

\begin{table}
\begin{center}
\begin{tabular}{c|c|c|c|c|c|c}
\textbf{Parameter:} 	& $\Delta + f$	&  $k$ 		&  $k + f$	&   $k+\Delta$ or $p$ & $n-p$ & $(n-p)+k$  \\\hline 
\textbf{Status:}		& $\notin$ XP & \begin{tabular}{c} W[2]-hard\\ in  W[P] (Prop.~\ref{prop3})\end{tabular}  &  \begin{tabular}{c} W[1]-hard\\ in  W[P]\end{tabular}	& FPT~\cite{blaserpartialcover,DBLP:journals/corr/BonnetPS13} & $\notin$ XP~\cite{AAAI159674} & W[2]-complete~\cite{AAAI159674}
\end{tabular} 
\end{center}
\caption{Exact parameterized complexity of \pSC{} for different parameters.}\label{table:complSC}
\end{table}

\subsection{Approximation Issues}\label{approxissues}

Let us now handle parameterized approximation of \pSC{}. We first prove the following basic lemma that is an easy generalization of Proposition~5.2 given by Feige~\cite{feigescj}.
\begin{lemma}\label{feigegenlem}
Any $r$-approximation algorithm, parameterized by~$k$, for \pSC{} can be transformed into an FPT $t$-approximation algorithm, parameterized by the standard parameter (optimum value), for \SC{} where:
 $$
 t < \left\lceil\frac{-\ln{n}}{\ln(1-r)}\right\rceil
 $$
 \end{lemma}
\begin{proof}
The basic idea is similar to the idea of Feige~\cite{feigescj} (Proposition~5.2). Its key ingredient is the following. Consider some algorithm \kscalg{} that solves \pSC{}. Then, it can iteratively be used to solve \SC{} as follows. Consider an instance $I = (\mathcal{F},U)$ of \SC{} where~$\mathcal{F}$ is a family of subsets of a ground set~$U$. 
Iteratively run \kscalg{} for $k = 1, \ldots, m$ (where~$m$ is the size of~$\mathcal{F}$). 
Eventually, one value of $k$ will equal the value of the optimal solution for \SC{}. 
Let us reason with respect to this value of~$k$, denoted by~$k_0$. Furthermore, assume that \kscalg{} achieves approximation ratio~$r$ for \pSC{}. 
Invoke it with value~$k_0$, (note that now $p=n$, the size of the ground set~$U$), remove the ground elements covered, store the~$k_0$ elements used and relaunch it with value~$k_0$, until all ground elements of~$U$ are removed. 
Since it is assumed to achieve approximation ratio~$r$  after its $\ell$-th execution at most $(1-r)^{\ell}n$ ground elements remain uncovered. 
Finally, suppose that after~$t$ executions, all ground elements are removed (covered). Then, the~$t k_0$ subsets stored form a $t$-approximate solution for the \SC{}-instance, where~$t$ satisfies (after some very simple algebra):
\begin{equation}\label{scbasicratio}
(1-r)^{t}n < 1 \Longrightarrow t < \left\lceil\frac{-\ln{n}}{\ln(1-r)}\right\rceil
\end{equation}
Moreover, observe that the complexity of the algorithm derived for \SC{} is at most~$m$ times the complexity of \kscalg; so, it remains FPT with respect to the optimal value for \SC{}  .~\qed
\end{proof}
Recall that, as mentioned in the beginning of Section~\ref{intro}, \pSC{} is inapproximable in polynomial time within ratio $1- \nicefrac{1}{e} + \epsilon$, for any $\epsilon >0$, unless $\mathrm{P} = \mathrm{NP}$~\cite{feigescj}. 
We first prove in the sequel a conditional result, informally, getting such a ratio even in FPT time parameterized by~$k$, is a rather difficult task. More precisely, we prove that if this were possible, then we could get, in parameterized time, an approximation ratio of~$(1-\epsilon)\ln(\nicefrac{n}{\ln n})$ for \SC{}, for some fixed $\epsilon > 0$. Even if this result is, properly speaking, a conditional result, it gives, in some sense, the measure of the difficulty of approximating \pSC{} within ratio strictly better than $1- \nicefrac{1}{e}$ in FPT time parameterized by~$k$. 
%For this, we will use a somewhat weaker result using a stronger complexity assumption, namely that \textit{\pSC{} is inapproximable in polynomial time within ratio $(1- (1/e) + \epsilon)$, for any $\epsilon >0$, unless $\mathrm{NP} \subset \mathrm{DTIME}(n^{O(\log\log{n})})}$~\cite{feigescj}. 
Next, we  prove an FPT inapproximability result, namely that FPT approximation of \pSC{} within ratio greater than $1 - (\nicefrac{1}{n})^{\sqrt[4]{\ln{k}}}$
%$1 - (\nicefrac{1}{n})^{\ln^{\nicefrac{1}{4}}n}$ 
is impossible unless $\text{W[2] = FPT}$. 
\begin{proposition}\label{inapproxpsccond}
\pSC{} parameterized by~$k$ is inapproximable within ratio $(1- \nicefrac{1}{e} + \epsilon)$, for any $\epsilon \in [0, \nicefrac{1}{e})$, unless \SC{} is approximable within ratio~$(1-\eta)\ln(\nicefrac{n}{\ln{n}})$, for some fixed $\eta> 0$, in FPT time parameterized by the value of the optimum.
\end{proposition}
\begin{proof}
Revisit Lemma~\ref{feigegenlem}, take $r = 1 - (\nicefrac{1}{e}) + \epsilon$, for some $\epsilon \in [0, \nicefrac{1}{e})$ and assume that the \kscalg{} of Lemma~\ref{feigegenlem} (that is FPT in~$k$) achieves approximation ratio~$r$. Then, in order to prove the result claimed, follow the procedure described in Lemma~\ref{feigegenlem} until there are at most~$\ln n$, say~$c\ln n$ for some $c \leqslant 1$, uncovered elements in~$U$ and solve the remaining instance by, say, the best known exact algorithm which works within~$O^*(2^n)$ in instances with ground set-size~$n$~\cite{incexcsetpart}. Since the surviving ground set has size~$c\ln n$, it is polynomial to optimally solve it. Reasoning exactly as in Lemma~\ref{feigegenlem} we get:
\begin{eqnarray*}
n(1-r)^t = c\ln{n} \Longrightarrow t &=& \frac{\ln{c} + \ln\ln{n} - \ln{n}}{\ln(1-r)} \leq \frac{\ln\ln{n} - \ln{n}}{\ln(1-r)} \simeq \frac{\ln\ln{n} - \ln{n}}{-\epsilon e - 1} \\ 
&=& \frac{\ln{n} - \ln\ln{n}}{1+\epsilon e} = \frac{1}{1+\epsilon e}\ln\left(\frac{n}{\ln{n}}\right) 
%= (1-\eta)\ln\left(\frac{n}{\ln{n}}\right) 
\end{eqnarray*}
Setting $\eta = \nicefrac{\epsilon e}{(1+\epsilon e)}$,
%some algorithm, say 
%
%
%
%~(\ref{scbasicratio}) in Lemma~\ref{feigegenlem} and take $r = 1 - (\nicefrac{1}{e}) + \epsilon$, for some $\epsilon \in [0, \nicefrac{1}{e})$. Then:
%$$
%1-r = \frac{1}{e} - \epsilon = \frac{1- \epsilon e}{e} \Longrightarrow \ln(1-r) = \ln(1- \epsilon e) - 1
%$$
%Using this expression for $\ln(1-r)$,~(\ref{scbasicratio}) leads to:
%$$
%t \leqslant \left\lceil\frac{-\ln{n}}{\ln(1-r)}\right\rceil \simeq \frac{-\ln{n}}{\ln(1-r)} = \frac{-\ln{n}}{\ln(1- \epsilon e) - 1} = \frac{\ln{n}}{1 - \ln(1-\epsilon e)} \leqslant c'\ln{n}
%$$
%for some constant $c' < 1$.
%Assume now that the \kscalg{} of Lemma~\ref{feigegenlem} is FPT in~$k$ and that it achieves approximation ratio $1 - \nicefrac{1}{e} + \epsilon$ for some small~$\epsilon$. Then, in order to prove the result claimed, follow the procedure described in Lemma~\ref{feigegenlem} until there are at most~$\ln n$ uncovered elements in~$U$, stop it and solve the remaining instance by, say, the best known exact algorithm which works within~$O^*(2^n)$ in instances with ground set-size~$n$~\cite{incexcsetpart}. Since the surviving ground set has size~$\ln n$, it is polynomial to optimally solve it. After some easy algebra, one gets:
%$$
%t \leqslant \frac{1}{1+\epsilon e} \ln\left(\frac{n}{c\ln n}\right)
%$$
the proof of the proposition is concluded.~\qed
\end{proof}
As one can see, the result of Proposition~\ref{inapproxpsccond} is conditional and relates the parameterized approximability of \pSC{} within ratios better than the one achieved in polynomial time to the parameterized approximability of \SC{} within ratios that are almost the same (in fact slightly smaller) as the one polynomially achieved for this problem. Furthermore this ratio is tight for the polynomial time (recall that it is {NP}-hard to approximate \SC{} within ratio $(1-\varepsilon)\ln{n}$, for arbitrarily small $\varepsilon > 0$~\cite{DBLP:conf/approx/Moshkovitz12}). We conjecture that the real parameterized (with respect to the optimum) inapproximability bound of \SC{} is~$O(\log n)$, so that the inapproximability bound (conditionally) conjectured by Proposition~\ref{inapproxpsccond} is the correct one. 
But, unfortunately, we have not been able to prove it until now and the negative result by Moshkovitz in~\cite{DBLP:conf/approx/Moshkovitz12} does not seem to be usable as it is for the parameterized inapproximability of \pSC{}.

In what follows, in the spirit of Lemma~\ref{feigegenlem} and of Proposition~\ref{inapproxpsccond} and based upon a recent result by Chen and Lin~\cite{DBLP:journals/corr/ChenL15b}, we show a weaker upper bound for the parameterized approximability of \pSC{} with respect to~$k$.

Consider the \DS{} problem defined as follows:\textit{ given a graph $G= (V,E)$, determine a minimum size vertex subset $D \subseteq V$ such that every vertex of~$V$ is either in~$D$ or has a neighbor in~$D$}. There exists a well-known approximability preserving reduction from \DS{} to \SC{} that works as follows: given a graph~$G = (V,E)$ of order~$n$, instance of \DS{}, we transform it into an instance $I = (\mathcal{F},U)$ of \SC{} as follows:
\begin{itemize}
\item $\mathcal{F} = \{F_1, \ldots, F_n\}$;
\item $U = \{u_1, \ldots, u_n\}$;
\item $\forall i \in \{1, \ldots, n\}$, $F_i = \{u_j: v_j \in \Gamma[v_i]\}$, where~$\Gamma[v_i]$ denotes the closed neighborhood of vertex $v_i \in V$. 
\end{itemize}
Then, it is easy to see that any dominating set~$D$ of~$G$ corresponds to a set cover~$\mathcal{F}'$ of~$I$ of the same cardinality by simply considering in~$\mathcal{F}'$ the subsets of~$\mathcal{F}$ having the same indices with the vertices of~$D$ and vice-versa. An immediate consequence of this reduction is that both problems share the same approximation ratios and inapproximability bounds.

Recently, Chen and Lin~\cite{DBLP:journals/corr/ChenL15b} have proved that, \textit{under~ETH, no~FPT algorithm for~\DS{} can achieve approximation ratio smaller than, or equal to,~$\sqrt[4+\epsilon]{\ln\gamma(G)}$, for any positive constant~$\epsilon$, where~$\gamma(G)$ denotes the cardinality of a minimum dominating set in~$G$}. 
%A careful reading of this paper deduces that the inapproximability bound conjectured is indeed arbitrarily close to~$\sqrt[4]{\ln{n}}$.
The reduction just described, immediately transfers this lower bound to \SC{}; so the following holds for this latter problem: \textit{under~ETH, no~FPT algorithm for~\SC{} can achieve approximation ratio smaller than, or equal to,~$\sqrt[4+\epsilon]{\ln k_0}$, for any positive constant~$\epsilon$, where~$k_0$ denotes the cardinality of a minimum set cover instance~$(\mathcal{F},U)$}. 

Based upon the result by Chen and Lin~\cite{DBLP:journals/corr/ChenL15b}, Lemma~\ref{feigegenlem}, and the approximation-preserving reduction from \DS{} to \pSC{} given just above, the following can be proved.
\begin{proposition}\label{kscparaminapprox}
\pSC{} is inapproximable within ratio $1 - (\nicefrac{1}{n})^{\sqrt[4]{\ln{k}}}$ in FPT time parameterized by~$k$, unless ETH fails.
\end{proposition}
\begin{proof}
The proof follows the one of Lemma~\ref{feigegenlem}. From~(\ref{scbasicratio}), taking for~$t$ (the ratio of \SC{}) the $\sqrt[4]{\ln{k_0}}$-inapproximability bound derived by the the reduction above and by Chen and Lin in~\cite{DBLP:journals/corr/ChenL15b} and omitting (in order to simplify calculations) the ceiling in~(\ref{scbasicratio}), some elementary algebra leads to:
$$
\ln(1-r) \leqslant \frac{-\ln{n}}{t} \leqslant \frac{-\ln{n}}{\sqrt[4]{\ln{k_0}}} \Longrightarrow 1 - r \geqslant \left(\frac{1}{n}\right)^{\sqrt[4]{\ln{K_0}}} \Longrightarrow r \leqslant 1 - \left(\frac{1}{n}\right)^{\sqrt[4]{\ln{k_0}}}
$$
as claimed.~\qed
\end{proof}
Let us note that a parameterized inapproximability bound weaker than that of Proposition~\ref{kscparaminapprox} can be obtained as follows. Consider an instance~$G$ of \DS{} and transform it to an instance $I = (\mathcal{F},U)$ of \SC{} by the transformation seen above. Assume now that \kscalg{} achieves ratio $1 - \nicefrac{c}{n}$ for some fixed $c > 1$. Then, just run \kscalg{} only once for every~$k$. Assuming that \kscalg{} runs in time~$O(p(n)F(k))$ for some polynomial~$p$, the whole of runs will take~$m\cdot O(p(n)F(k))$-time that remains FPT in~$k$. For~$k_0$ (the value of the optimal solution for~\SC{} in instance~$I$), it holds that, after this run, at most $n - n(1 - (\nicefrac{c}{n})) = c$ elements will remain uncovered. Any (non-trivial) cover for them uses at most~$c$ sets to cover them. In this case, the procedure above achieves an \emph{additive} approximation error $c+1$ (recall that~$c$ is fixed) for \SC{}, and this ratio is identically transferred to \DS{} via the reduction. But for \DS{}, achievement of any constant additive approximation error is W[2]-hard~\cite{Downey:appDS}. So, the following corollary holds.
\begin{corollary}
\pSC{} is inapproximable within ratio $1 - \nicefrac{c}{n}$ in time parameterized by~$k$, for any constant $c > 1$, unless $\text{W[2] = FPT}$.
\end{corollary}
%\sloppy For item~\ref{dyo}, we will show how parameterized (with respect to~$k$) approximability of \pSC{} within ratio $1 - \nicefrac{c}{n}$, for some constant $c > 1$, would lead to parameterized approximability of \DS{} within \textit{additive} approximation error $c+1$.
%
%Consider an instance~$G$ of \DS{} and transform it to an instance~$(\mathcal{S},X)$ of \SC{} by the transformation seen above. Assume now that \kscalg{} achieves ratio $1 - \nicefrac{c}{n}$ for some fixed $c > 1$. Then, just run \kscalg{} only once for every~$k$. Assuming that \kscalg{} runs in time~$O(p(n)F(k))$ for some polynomial~$p$, the whole of runs will take~$m\cdot O(p(n)F(k))$-time that remains FPT in~$k$. For~$k_0$, it holds that, after this run, at most $n - n(1 - (\nicefrac{c}{n})) = c$ elements will remain uncovered. Any (non-trivial) cover for them uses at most~$c$ sets to cover them. In this case, the procedure above achieves an \emph{additive} approximation error $c+1$ (recall that~$c$ is fixed) for \SC{}, and this ratio is identically transferred to \DS{} via the reduction. But for \DS{}, achievement of any constant additive approximation error is W[2]-hard~\cite{Downey:appDS}.~\qed
%\end{proof}

For the rest of the section, we relax the optimality requirement for the \pSC-solution and we show that we can devise an approximation algorithm with ratio strictly better than $1 - \nicefrac{1}{e}$ (beating so the polynomial inapproximability bound of Feige~\cite{feigescj}), that runs in FPT time parameterized by~$k$ and~$\Delta$ but whose complexity is lower (although depending on the accuracy) than the best exact parameterized complexity for \pSC. 

In fact, we are going to prove a stronger result, claiming that, for any polynomial approximation ratio~$r$ achieved by some polynomial time algorithm \texttt{APPROX} and any parameterized algorithm~$\mathtt{PARAM}(\pi)$ for \pSC{}, where~$\pi$ can be a single parameter or a vector of parameters,~$r$ can be improved in FPT time parameterized by~$\pi$ by an algorithm whose running time is smaller than the one of~$\mathtt{PARAM}(\pi)$.

The way of doing it is to built a kind of hybrid algorithm that, informally, in the case of \pSC{} we deal with, where $\pi = (k,\Delta)$, it works as follows:
\begin{itemize}
\item it solves \pSC{} by invoking $\mathtt{PARAM}(\pi')$, with $\pi' = (k', \Delta)$ for some $k' < k$, stores the solution computed, and removes it from the initial instance; 
\item it invokes \texttt{APPROX} on the surviving instance and stores the solution obtained;
\item it takes the union of the two solutions.
\end{itemize}
Here, our objective is to establish a trade-off between the solution quality and the running time. 
Therefore, the running time of the presented algorithms depend on the approximation ratio.

Assume an FPT (exact) algorithm running in time~$O^*(F(k,\Delta))$ for some function~$F$ (that is Algorithm $\mathtt{PARAM}(k,\Delta)$) together with an approximation algorithm \texttt{APPROX} achieving approximation ratio~$r$ for \pSC and consider Algorithm~\ref{pscschema}, called \pscschema{} in what follows, running on an instance~$(\mathcal{S},X)$ of \pSC{}, where~$X$ is the ground set and~$\mathcal S$ a family of subsets of~$X$. 

\begin{algorithm}
\KwIn{An instance ~$(\mathcal{S},X,k)$ of \pSC{}.}
\KwOut{A subfamily $\hat{\mathcal{T}} \subset \mathcal S$ containing~$k$ subsets.}
fix some $\varepsilon > 0$\;
take $\mu \in (0,1)$ such that $\varepsilon > (1-e^{-1})\mu^2 - (1 -2e^{-1})\mu$\;
%$\mu \leqslant 1 - (\varepsilon/0.368)$ 
set $k' = \mu k$\; 
run Algorithm \texttt{PARAM}$(k',\Delta)$ and store the solution computed (denoted by~$\mathcal{T}_1$); let~$X_1$ be the subset of~$X$ covered by~$\mathcal{T}_1$\;
set $\mathcal{S}' = \mathcal{S} \setminus \mathcal{T}_1$, $X' = X \setminus X_1$ and $k'' = k - k' = (1-\mu) k$\; 
run the $r$-approximation algorithm \texttt{APPROX}
%\pSC-algorithm of Feige~\cite{feigescj} 
on the \pSC{}-instance~$(\mathcal{S}',X',k'')$ and store the solution~$\mathcal{T}_2$ computed\;
\Return{$\hat{\mathcal{T}} = \mathcal{T}_1 \cup \mathcal{T}_2$}\;
\caption{A description of the algorithm \pscschema.}
\label{pscschema}
\end{algorithm}
%\begin{enumerate}
%\item\label{schema2} fix some $\varepsilon > 0$, take~$\mu$ such that $\varepsilon > (1-e^{-1})\mu^2 - (1 -2e^{-1})\mu$
%%$\mu \leqslant 1 - (\varepsilon/0.368)$ 
%and set $k' = \mu k$; run Algorithm \fptalg{} with parameter~$k'$ and~$\Delta$ and store the solution computed (denoted by~$\mathcal{T}_1$); let~$X_1$ be the subset of~$X$ covered by~$\mathcal{T}_1$;
%%suppose that~$\mathcal{T}_1$ contains~$k'$ sets), 
%%and remove~$\mathcal{T}_1$ from~$\mathcal{S}$, as well as the set~$X_1$ of elements covered by~$\mathcal{T}_1$;
%%from the ground set~$X$  and  the~$k'$ sets used;
%\item set $\mathcal{S}' = \mathcal{S} \setminus \mathcal{T}_1$, $X' = X \setminus X_1$ and $k'' = k - k' = (1-\mu) k$; run the polynomial approximation \pSC-algorithm of~\cite{feigescj} on the \pSC{}-instance~$(\mathcal{S}',X')$ (with parameter~$k''$) and store the solution~$\mathcal{T}_2$ computed;
%\item output $\hat{\mathcal{T}} = \mathcal{T}_1 \cup \mathcal{T}_2$.
%\end{enumerate}
Solution~$\hat{\mathcal{T}} $ computed by \pscschema{} has cardinality~$k$, i.e., it is feasible for \pSC{}. 
Let us now analyze it in the following proposition.
%We show in the following proposition that it is also a $\varepsilon$-approximation for this problem, for any $\varepsilon > 0$.

\begin{proposition}\label{paramapproxp}
Given a polynomial time approximation algorithm with ratio $r$ for \pSC{},  for any $\mu \in (2- (\nicefrac{1}{r}),1]$, \pSC{} can be approximated within ratio $r(1-\mu)^2 + \mu > r$, in~$O^*(F(\mu k,\Delta))$-time, where $\Delta$ is the maximum cardinality of a set.
\end{proposition}
\begin{proof}
Fix an optimal solution~$\mathcal{T}^*$ and denote by~$X^*$ the subset of~$X$ covered by~$\mathcal{T}^*$. 
Recall that in Algorithm~\ref{pscschema}, we denote by $X_1$ the subset of $X$ covered by the set $\mathcal{T}_1$ computed by \texttt{PARAM}$(k',\Delta)$.
Obviously: 
\begin{equation}\label{paramp1}
\left|X_1\right| \geqslant \mu\left|X^*\right|
\end{equation}
Fix an optimal \textsc{max} $k''$-\textsc{set cover}-solution~$\bar{\mathcal{T}}$ of~$(\mathcal{S}',X')$, denote by~$\bar{X}^*$ the set of elements of~$X$ covered by~$\bar{\mathcal{T}}$ and set $\tilde{X} = X^* \cap X_1$. Remark now the following facts:
\begin{enumerate}
\item\label{fact1} $X^* \setminus \tilde{X}$ is covered with at least~$k''$ sets in~$\mathcal{T}^*$ (denote by~$\mathcal{T}''^{*}$ this system); otherwise, the sets of~$\mathcal{T}^*$ covering $X^* \setminus \tilde{X}$ together with~$\mathcal{T}_1$ would be a solution better than~$\mathcal{T}^*$; indeed, if the elements of $X^* \setminus \tilde{X}$ were covered with less than~$k''$ sets, i.e., if $|\mathcal{T}''^{*}| < k''$, then, since $\tilde{X} \subseteq X_1$, $\mathcal{T}_1 \cup \mathcal{T}''^{*}$ would be a set system of size $|\mathcal{T}_1 \cup \mathcal{T}''^{*}| < k$ covering~$|X^*|$ elements; in this case, completing (even greedily) the family $\mathcal{T}_1 \cup \mathcal{T}''^{*}$ with $k - |\mathcal{T}_1 \cup \mathcal{T}''^{*}|$ sets would lead to a $k$-sets subfamily of~$\mathcal{S}$ covering more than~$|X^*|$ ground elements, absurd since~$X^*$ is the value of an optimal solution for \pSC; 
\item\label{fact3} the elements of $X^* \setminus \tilde{X}$ are still present in the instance~$(\mathcal{S}',X')$ where the $r$-approximation algorithm \texttt{APPROX} is invoked, as well as the subsets of~$\mathcal{S}$ covering them, i.e., the sets of~$\mathcal{T}''^{*}$;
\item\label{fact2} hence, the~$k''$ ``best''\footnote{In the sense that they cover the most of the elements covered by any other union of~$k''$ sets of~$\mathcal{T}''^{*}$.}  sets of~$\mathcal{T}''^{*}$ form a feasible solution for \textsc{max} $k''$-\textsc{set cover} in~$(\mathcal{S}',X')$ and cover more than $(\nicefrac{k''}{k})|X^* \setminus \tilde{X}|$ elements of~$X^* \setminus \tilde{X}$.
\end{enumerate}
Combining Facts~\ref{fact1}, \ref{fact3} and~\ref{fact2} and taking into account that~$\mathcal{T}_2$ is an $r$-approximation for \pSC, the following holds denoting by~$X_2$ the subset of~$X$ covered by~$\mathcal{T}_2$:
\begin{eqnarray}
\left|X_2\right| &\geqslant& r\cdot \left|\bar{X}^*\right| \geqslant r\cdot\frac{k''}{k}\cdot\left(\left|X^* \setminus \tilde{X}\right|\right) %\nonumber \\
=  r\cdot \frac{k-k'}{k}\cdot \left(\left|X^* \setminus \tilde{X}\right|\right) \nonumber \\
&=& r\cdot(1-\mu)\left(\left|X^* \setminus \tilde{X}\right|\right) \label{paramp2} \\
\left|X^* \setminus \tilde{X}\right| &=& \left|X^*\right| - \left|\tilde{X}\right| \geqslant %\left|X^* \setminus X^*_1\right| = 
\left|X^*\right| - \left|X_1\right| \label{paramp3}
\end{eqnarray}
Putting together~(\ref{paramp1}), (\ref{paramp2}) and~(\ref{paramp3}), we get the following for the approximation ratio of Algorithm \pscschema{}:
\begin{eqnarray}\label{finalratio}
\frac{\left|X_1\right| + \left|X_2\right|}{\left|X^*\right|} &\geqslant& \frac{\left|X_1\right| + r\cdot(1-\mu)\cdot\left(\left|X^*\right| - \left|X_1\right|\right)}{\left|X^*\right|}   \\
&\geqslant& \frac{r\cdot(1-\mu)\cdot\left|X^*\right| + \left[1-r\cdot(1-\mu)\right]\cdot\left|X_1\right|}{\left|X^*\right|} \nonumber \\
&\geqslant& \frac{\left|X^*\right|\cdot\left[r\cdot(1-\mu) + \mu\cdot\left[1-r\cdot(1-\mu)\right]\right]}{\left|X^*\right|} \nonumber \\
&=& r\cdot(1-\mu) + \mu\cdot\left[1-r\cdot(1-\mu)\right] \nonumber = r(1-\mu)^2 + \mu
\end{eqnarray}
Ratio in~(\ref{finalratio}) is at least~$r$, for any $\mu \in ((2- (\nicefrac{1}{r})), 1]$.

For the overall running time, it suffices to observe that, since \texttt{APPROX} runs in polynomial time, the running time of \pscschema{} is dominated by that of \texttt{PARAM} invoked within Algorithm~\ref{pscschema}. Thus, the whole complexity of Algorithm \pscschema{} becomes~$O^*(F(\mu k,\Delta))$, as claimed.~\qed
\end{proof}
Revisit now the proof of Proposition~\ref{paramapproxp} and set $r = 1 - e^{-1}$. Then,~(\ref{finalratio}) becomes:
\begin{eqnarray*}
\frac{\left|X_1\right| + \left|X_2\right|}{\left|X^*\right|} &\geqslant& \frac{\left|X_1\right| + \left(1-e^{-1}\right)\cdot(1-\mu)\cdot\left(\left|X^*\right| - \left|X_1\right|\right)}{\left|X^*\right|} \nonumber \\
&\geqslant& \frac{\left(1-e^{-1}\right)\cdot(1-\mu)\cdot\left|X^*\right| + \left[1-\left(1-e^{-1}\right)\cdot(1-\mu)\right]\cdot\left|X_1\right|}{\left|X^*\right|} \nonumber \\
&\geqslant& \frac{\left|X^*\right|\cdot\left[\left(1-e^{-1}\right)\cdot(1-\mu) + \mu\cdot\left[1-\left(1-e^{-1}\right)\cdot(1-\mu)\right]\right]}{\left|X^*\right|} \nonumber \\
&=& \left(1-e^{-1}\right)\cdot(1-\mu) + \mu\cdot\left[1-\left(1-e^{-1}\right)\cdot(1-\mu)\right] \nonumber \\
&=& \left(1-e^{-1}\right) -  \left(1-2e^{-1}\right)\cdot\mu + \left(1-e^{-1}\right)\cdot\mu^2
\end{eqnarray*}
This ratio is at least $1 - \nicefrac{1}{e} + \varepsilon$, for any $\varepsilon > (1-e^{-1})\mu^2 - (1-2e^{-1})\mu$, and the following corollary holds.
\begin{corollary}\label{paramapproxpcor}
For any $\mu > \nicefrac{(e-2)}{(e-1)}$ and any $\varepsilon > (1-e^{-1})\mu^2 - (1 -2e^{-1})\mu$, \pSC{} can be approximated within ratio $1 - \nicefrac{1}{e} + \varepsilon$, in~$O^*(F(\mu k,\Delta))$-time. 
\end{corollary}
For instance, let us take $\mu = 0.5 \geqslant \nicefrac{(e-2)}{(e-1)}$. Then application of Corollary~\ref{paramapproxpcor} leads, after some easy algebra to an approximation ratio at least $0.75 + (\nicefrac{0.25}{e}) \geqslant 0.841$ achieved with complexity~$F(\nicefrac{k}{2}, \Delta)$. Even if, in the case of \pSC{}, the potential of the running time seems to be quite small, we think that looking for this type of trade-offs between approximation ratios and running times in order to try to beat polynomial approximation barriers is an interesting research program.

%For instance, using Algorithm \algone{} for \fptalg{}, the complexity claimed by Proposition~\ref{paramapproxp}  becomes $O^*(2^{p+\Delta-1}(\Delta+1)^{\mu k + p})  \leqslant O^*(4^pp^{p(1+\mu)})$.
In~\cite{maxkvcjoco}  an analogous result is given. There, the authors try to beat the APX-hardness of \textsc{max $k$-vertex cover}  by a moderately exponential approximation algorithm, i.e., by an approximation algorithm achieving ratio $1-\epsilon$, for any $\epsilon > 0$, and running in time that, although exponential, remains smaller than the running time of the best known exact algorithm for this problem. More precisely, the following is proved by Della Croce and Paschos~\cite{maxkvcjoco}. \textit{If~$T(n,k)$ is the running time of an exact algorithm and~$\rho$ the approximation ratio of some polynomial approximation algorithm for \pVC{}, then, for any $\epsilon > 0$, \pVC{} can be approximated within ratio $1 - \epsilon$ with worst-case running time~$T(n,[(2\rho-1)+\sqrt{1-4\epsilon\rho}]k/2\rho)$. }
 
\section{\pSAT{}}\label{satsec}

We now study a generalization of \pSC{}, the \pSAT{} problem.
%As mentioned in Section~\ref{intro}, other problems than \pSC{} are W[2]-hard in~W[P]. 
%We introduce another quite natural satisfiability problem, called \pSAT{}, and prove that it behaves like \pSC{}, from a parameterized point of view. 
%
\begin{comment}
\PbOpt{\pSAT{}}{A CNF-formula $\phi$ on a set of variables $X$, two integers~$k$ and~$p$.}{A subset $S$ of $X$ of size at most $k$, such that setting the variables of $S$ to true and the variables of $X \setminus S$ to false, satisfies at least $p$ clauses.}
\end{comment}
Recall that in \pSAT{}, given a CNF formula on~$n$ variables and~$m$ clauses, the objective is to satisfy at least~$p$ clauses, by setting at most~$k$ variables to \textit{true}. 
%In other words, given a CNF on~$n$ variables and~$m$ clauses, \pSAT{} consists of setting at most~$k$ variables to true, satisfying at least~$p$ clauses. 
\begin{proposition}\label{psathard}
\pSAT\ parameterized by $k$ is W[2]-hard and in W[P].
\end{proposition}
\begin{proof}
Setting $p = m$, \pSAT becomes \textsc{sat-$k$} that is W[2]-hard~\cite{niedermeier06} (under the name \textsc{weig\-h\-ted CNF-satisfiability}). Proof of membership of \pSAT{} in~W[P] can be done by an easy reduction of this problem to \BNTMC{}. One can guess within~$k$ non deterministic steps the variables to put to \textit{true} and then one can check in polynomial time whether, or not, at least~$p$ clauses are satisfied.~\qed
\end{proof}
Consider an instance~$\phi$ of \pSAT{} on a set of~$C$ clauses. 
Given any subset~$C'$ of~$C$, we denote by~occ$^+(x_i,C')$ the number of positive occurrences of the variable~$x_i$ in~$C'$, and by~occ$^-(x_i,C')$ the number of its negative occurrences. 
We set $f(x_i):=\mathrm{occ}^+(x_i,C)+\mathrm{occ}^-(x_i,C)$; so, the frequency of the formula is $f:=\max_i f(x_i)$.
%The following lemma holds (its proof is deferred in the appendix).

Before proving that \pSAT{} is~FPT with respect to parameter~$p$, let us introduce some vocabulary on branching algorithms.
A \emph{partial} solution is a subset of a (\emph{complete}) solution.
%An \emph{extension} of a partial solution is a superset of this partial solution.
A \emph{branching} algorithm is a recursive algorithm.
Its execution on an instance~$I$ can be seen as a tree, called \emph{branching tree}.
In this tree, each \emph{node} is labeled with a sub-instance of~$I$ together with a partial solution, or more generally with some data maintained by the algorithm.
The \emph{root} is labeled with~$I$ and a \emph{leaf} is a sub-instance that causes the branching algorithm to stop.
At a leaf, a complete solution is computed and returned.
When identifying a node~$v$ to its label (a sub-instance), the \emph{children} of a sub-instance are the subinstances which label the children of~$v$ in the branching tree.

We now prove this simple lemma.
\begin{lemma}\label{lem:pSAT}
\pSAT{} is solvable in time $O^*(2^m)$.
\end{lemma}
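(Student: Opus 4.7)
The plan is to decide \pSAT{} by computing, via inclusion--exclusion over subsets of clauses, the number $N_{k,p}$ of assignments setting at most $k$ variables to true and satisfying at least $p$ clauses; the instance is positive iff $N_{k,p} > 0$. Let $\mathrm{Pos}(c)$, $\mathrm{Neg}(c)$ denote the variables appearing in the clause $c$ positively, resp.\ negatively, extended by union to $\mathrm{Pos}(V)$, $\mathrm{Neg}(V)$ for $V \subseteq C$. A clause $c$ is \emph{unsatisfied} by a true-set $S$ iff $S \supseteq \mathrm{Neg}(c)$ and $S \cap \mathrm{Pos}(c) = \emptyset$, so for each $V \subseteq C$ the quantity
\[
F(V) = \bigl|\{\,S : |S| \leq k,\ S \supseteq \mathrm{Neg}(V),\ S \cap \mathrm{Pos}(V) = \emptyset\,\}\bigr|
\]
counts the $k$-bounded assignments that leave every clause of $V$ unsatisfied. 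If $\mathrm{Pos}(V) \cap \mathrm{Neg}(V) \neq \emptyset$ the constraints are infeasible and $F(V) = 0$; otherwise
\[
F(V) = \sum_{s = |\mathrm{Neg}(V)|}^{k} \binom{n - |\mathrm{Pos}(V)| - |\mathrm{Neg}(V)|}{s - |\mathrm{Neg}(V)|},
\]
which is a polynomial-time closed form (the free variables outside $\mathrm{Pos}(V) \cup \mathrm{Neg}(V)$ can be any subset of the appropriate size).

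Writing $U(S) = \{c : c \text{ is unsatisfied by } S\}$, we have $N(S) \geq p$ iff $|U(S)| \leq m - p$, so $N_{k,p} = |\{S : |S| \leq k,\ |U(S)| \leq m-p\}|$. Möbius inversion on the subset lattice of $C$, using $F(V) = \sum_{W \supseteq V} |\{S : |S|\leq k,\ U(S) = W\}|$ and swapping the two summations (over $W$ with $|W| \leq m-p$ and over $V \subseteq W$), yields
\[
N_{k,p} = \sum_{V \subseteq C} F(V)\, c(|V|),\qquad c(v) = \sum_{u=0}^{\min(v,\,m-p)} \binom{v}{u}\,(-1)^{v-u},
\]
and the identity $(1-1)^v = 0^v$ collapses $c(v)$ to $[v = 0]$ on $v \leq m-p$, so only $V = \emptyset$ and the $V$ with $|V| > m-p$ contribute; in any case $c(v)$ is polynomial to evaluate. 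As a sanity check, $p = m$ gives $c(v) = (-1)^v$ and recovers the classical inclusion--exclusion formula for counting (weighted) satisfying assignments of size at most $k$.

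Evaluating the sum visits each of the $2^m$ subsets $V \subseteq C$ once with polynomial work per subset (to compute $\mathrm{Pos}(V)$, $\mathrm{Neg}(V)$, $F(V)$, and $c(|V|)$), giving the claimed $O^*(2^m)$ running time. The main obstacle is the Möbius/coefficient bookkeeping --- in particular verifying the alternating-sum identity for $c(v)$ and that $N_{k,p}$, being a non-negative integer, is $>0$ exactly on yes-instances --- but once the formula is in hand the rest is a direct evaluation, and an actual witness $S$ can be recovered in the same asymptotic time by a standard self-reduction on the $n$ variables.
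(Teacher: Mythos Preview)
Your argument is correct and establishes the $O^*(2^m)$ bound, but it proceeds by a genuinely different route from the paper. The paper gives a direct search algorithm: it branches on any variable occurring with both polarities (each branch satisfies at least one new clause, so the branching tree has at most $2^m$ leaves), and once every variable is pure it sets the purely-negative ones to false (safe because only \emph{at most} $k$ variables must be true) and solves the residual monotone instance as \textsc{$k$-hitting set} in $O^*(2^m)$. Your proof instead \emph{counts} feasible assignments by inclusion--exclusion over subsets of clauses, exploiting that ``$V$ is entirely unsatisfied'' is a product condition on $\mathrm{Pos}(V)$ and $\mathrm{Neg}(V)$ so that $F(V)$ has a closed form; a M\"obius inversion then recovers $N_{k,p}$ in a single pass over the $2^m$ subsets. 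One small slip: in the sentence describing the swap of summations you write ``over $V \subseteq W$'', but the M\"obius step gives $g(W)=\sum_{V\supseteq W}(-1)^{|V|-|W|}F(V)$, so the inner range after swapping is $W\subseteq V$; your formula for $c(v)$ is nonetheless the right one. As for what each approach buys: the paper's branching is constructive and self-contained once one accepts the $O^*(2^m)$ hitting-set subroutine, while your counting argument avoids that black box, actually computes the number of solutions, and recovers a witness via a standard self-reduction at the cost of big-integer arithmetic on numbers of magnitude up to $2^n$.
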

\begin{proof}
We take any variable $x$ that appears positively in at least one clause \emph{and} negatively in at least one clause.
We do the standard branching: either set $x$ to \textit{true} (and decrease $k$ by 1), or set~$x$ to \textit{false}. 
This branching satisfies at least one more clause in each branch.
Therefore, the branching tree is a subtree of the full binary tree with $2^m$ leaves. 
At a leaf $\ell$ of the branching tree, the remaining number of clauses is at most $m-\lambda$ where $\lambda$ is the depth of $\ell$.
The branching stops when each variable appears only negatively, or only positively.
At this point, the variables appearing only negatively can be set to \textit{false}.
This step is safe since we are constrained to put \emph{at most} (not exactly) $k$ variables to \textit{true}.
We end up with an instance containing only positive literals.
Therefore, at the leaves, the instances can now be seen as instances of \textsc{max $k$-hitting set} with at most $m-\lambda$ sets; or equivalently, \textsc{max $k$-set cover} with at most $m-\lambda$ elements which can be solved by standard dynamic programming in time $O^*(2^{m-\lambda})$.
So, the overall running time is~$O^*(2^m)$.~\qed
\end{proof}
We identify a solution of an instance of \pSAT to the set~$S$ of variables put to \emph{true}.
It induces a set~$C'$ of satisfied clauses.
The algorithm solving \pSAT we will present, performs two kinds of choices:~(1) setting a variable to \emph{true} and~(2) putting a clause which is not satisfied yet into a set~$C_s$ of clauses that should eventually be satisfied.
Putting a clause in~$C_s$ means that we commit to satisfy it later. 
At any node of the branching tree, a child corresponds to either performing choice~(1) for some given variable, or choice~(2) for some given clause.
A choice~(1) is \emph{in accordance} with~$S$ if it sets to \emph{true} a variable in~$S$.
A choice~(2) is \emph{in accordance} with~$S$ if it puts in~$C_S$ a clause satisfied by solution~$S$ (i.e., this clause is in~$C'$). 
A node~$v$ of the branching tree is \emph{in accordance} with~$S$ if all the choices made from the root to this node are in accordance with~$S$.
A node~$v$ of the branching tree \emph{deviates} from a solution~$S$ if it is in accordance with~$S$ but none of its children are in accordance with~$S$.

Let us give a toy example to clarify those notions. 
Assume a solution which sets~$x_2$ and~$x_3$ to \emph{true} and all the other variables to \emph{false}.
So, $S=\{x_2,x_3\}$.
And, this assignment satisfies the following set of clauses: $\{c_2,c_4,c_5,c_6,c_8,c_9\}$.
Say, the root of the branching tree has~$4$ children: setting~$x_1$ to \emph{true}, committing to satisfy~$c_1$, committing to satisfy~$c_2$, and committing to satisfy~$c_6$.
The two first children are not in accordance with~$S$, but the two last are.
Indeed,~$c_2$ and~$c_6$ are satisfied by~$S$.
Let us move to the child where we commit to satisfy~$c_2$.
Suppose this node has three children: setting~$x_2$ to \emph{true}, committing to satisfy~$c_1$, committing to satisfy~$c_3$.
We now move to the child where we set~$x_2$ to \emph{true}.
So far, we have only done choices in accordance with~$S$, so our current node is in accordance with~$S$.
Now, say, this new node~$v$ has three children: setting~$x_1$ to \emph{true}, committing to satisfy~$c_3$, committing to satisfy~$c_7$.
None of those choices is in accordance with~$S$, so~$v$ deviates from~$S$.
\begin{proposition}
\pSAT{} parameterized by~$p$ is FPT.
\end{proposition}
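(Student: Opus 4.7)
The plan is to combine color-coding with Lemma~\ref{lem:pSAT}, reducing the input to a \pSAT-instance with exactly $p$ clauses that is then solved in $O^*(2^p)$ time.

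Given $(\phi,k,p)$ with $m$ clauses, I will pick a coloring $c\colon[m]\to[p]$ and partition the clauses of $\phi$ into color classes $\mathcal{C}_1,\dots,\mathcal{C}_p$. For each color~$i$, let $L_i$ be the set of literals appearing in some clause of $\mathcal{C}_i$, and form the derived clause $D_i=\bigvee_{\ell\in L_i}\ell$; the derived CNF $\phi'_c = D_1\wedge\cdots\wedge D_p$ has exactly $p$ clauses over the variables of $\phi$. I then invoke Lemma~\ref{lem:pSAT} on $(\phi'_c,k,p)$, which runs in $O^*(2^p)$ time and decides whether some assignment $S$ with $|S|\leq k$ satisfies all of $\phi'_c$. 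This call is repeated for every~$c$ in an enumerated $(m,p)$-perfect hash family, returning YES as soon as some $c$ yields a satisfying $S$.

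The correctness argument is two-way. Backward: if an assignment $S$ with $|S|\leq k$ satisfies every $D_i$, then for each~$i$ a true literal $\ell\in L_i$ witnesses the satisfaction of at least one clause of $\mathcal{C}_i$; because the color classes are disjoint, $S$ satisfies at least $p$ distinct clauses of $\phi$. Forward: if $(\phi,k,p)$ is a YES-instance with solution $S$, fix $p$ satisfied clauses $C_1,\dots,C_p$ and corresponding witness literals $\ell_1,\dots,\ell_p$; whenever $c$ is injective on $\{C_1,\dots,C_p\}$ each $\ell_i$ lies in $L_{c(C_i)}$, so $S$ satisfies $\phi'_c$. A standard $(m,p)$-perfect hash family of size $2^{O(p)}\log m$, constructible in polynomial time, guarantees that at least one such injective~$c$ is tried. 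The total running time is $2^{O(p)}\log m\cdot\mathrm{poly}(n,m)$, which is FPT in~$p$.

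The main subtlety is that each derived clause $D_i$ is a strict relaxation, pooling literals from potentially many different original clauses: a priori, a solution of $\phi'_c$ might be ``spurious'' and fail to lift to a solution of $\phi$. The backward direction dispels this worry, because any true literal $\ell\in L_i$ is automatically a witness for at least one original clause in $\mathcal{C}_i$, and the disjointness of the color classes precisely rules out double-counting the $p$ satisfied clauses obtained this way.
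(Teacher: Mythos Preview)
Your proof is correct and takes a genuinely different route from the paper. The paper's argument is a bounded-depth branching algorithm in the \emph{greediness-for-parameterization} style: it first disposes of instances with $k>p$ or $f>2p$ by easy direct arguments, then at each node greedily selects a variable~$X_i$ maximising positive occurrences among not-yet-committed clauses and branches on either adding~$X_i$ to the solution or committing to satisfy one clause containing~$X_i$; this gives a tree of depth at most $k+p$ and arity at most $f+1$, and each leaf is finished with the $O^*(2^{|C_s|})$ routine of Lemma~\ref{lem:pSAT}. The resulting bound is $O^*\bigl(2^p(f+1)^{k+p}\bigr)$, i.e., $p^{O(p)}$ after the preliminary reductions.

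Your color-coding reduction is both shorter and asymptotically faster: merging each color class into a single disjunction compresses the instance to exactly~$p$ clauses, after which Lemma~\ref{lem:pSAT} does all the work, yielding $2^{O(p)}\cdot\mathrm{poly}(n,m)$ overall. What the paper's approach buys is uniformity---it reuses the same template as Proposition~\ref{fptkd} for \pSC and makes the hybridization argument explicit---and avoids the perfect-hash-family machinery. What your approach buys is single-exponential dependence on~$p$, no preliminary case analysis on~$k$ or~$f$, and a cleaner correctness argument (the backward direction is the only nontrivial step, and your observation that any true literal in~$D_i$ automatically certifies an original clause in~$\mathcal{C}_i$ handles it).
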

\begin{proof}
Let $(C,k,p)$ be an instance of \pSAT where $C$ is a set of clauses over a set of variables $X=\{x_1,x_2,\ldots,x_n\}$, $k$ is the maximum number of variables that can be set to true, and $p$ the minimum number of clauses to satisfy.
We can assume that $p < \nicefrac{m}{2}$. 
Indeed, if $p \geqslant \nicefrac{m}{2}$, the algorithm of Lemma~\ref{lem:pSAT} is an FPT algorithm. 
We also assume that the number of clauses containing only negative literals is bounded above by~$\nicefrac{m}{2}$.
Otherwise, setting all the variables to \textit{false}, satisfies more than~$p$ clauses.
We recall that we are not forced to set exactly~$k$ variables to \textit{true}, but \emph{at most}~$k$. 
We observe that instances such that $p<\nicefrac{f}{2}$ are always YES-instances, since one can set one variable~$x_i$ with frequency~$f$ to \textit{true} if $\mathrm{occ}^+(x_i,C) \geqslant \mathrm{occ}^-(x_i,C)$, and to \textit{false} otherwise, and set all the other variables to false.
This assignment does indeed satisfy at least $\max(\mathrm{occ}^+(x_i,C),\mathrm{occ}^-(x_i,C)) \geqslant \nicefrac{f}{2} < p$ clauses.

 Note also that instances such that $p<k$ are all YES-instances, too. 
Indeed, one can iteratively set to \textit{true}~$k$ variables such that at each step one satisfies at least one more clause. 
If, at some point this is no longer possible, then setting all the remaining variables to \textit{false} will satisfy all the clauses which do not initially contain only negative literals, that is at least half of the clauses, so more than~$p$ clauses. 
We may now assume that $p \geqslant \nicefrac{f}{2}$ and $p \geqslant k$, so our parameter might as well be $p+f+k$. 

We construct a branching algorithm which operates accordingly to a greedy criterion. 
A solution, or \emph{complete assignment}, is given by a set~$S$ of size up to~$k$ which contains all the variables set to \textit{true}. 
Additionally, we maintain a list~$C_s$ of clauses that we satisfy or commit to satisfy. 
For notational convenience we define $C_u:=C \setminus C_s$, $d_i(C'):=\mathrm{occ}^+(x_i,C')$, and let~$C^+(x_i,C')$ be the set of clauses in~$C'$ where~$x_i$ appears positively and~$C^-(x_i,C')$ the set of clauses where~$x_i$ appears negatively. 
Finally, $C(x_i,C') := C^+(x_i,C') \cup C^-(x_i,C')$.
Algorithm~\algpSATk (see Algorithm~\ref{alg-two}) is fairly simple.
We find the variable $x$ that, if set to \emph{true}, would satisfy the maximum number of clauses among the still unsatisfied clauses.
We branch on setting $x$ to \emph{true} (choice (1)) or for each still unsatisfied clause $c$ that $x$ would satisfy, on putting $c$ in $C_s$ (choice (2)).
%Var$(C')$ is the set of variables appearing in the subset of clauses $C'$.

\begin{algorithm}
\KwIn{A set $C$ of clauses on a set $X$ of variables.}
\KwOut{A subset $S \subseteq X$ of size at most $k$ such that setting all the variables in $S$ to \textit{true} and all the variables in $X \setminus S$ to \textit{false}, satisfies the greatest number of clauses in $C$.}
{set} $S=\emptyset$, $C_s=\emptyset$\;
\algone$(S,C_s)$:\\
\eIf{$|S|<k$ \emph{\textbf{and}} $|C_s|<p$}{
     pick the variable $x_i$ maximizing $d_i(C \setminus C_s)$\; 
     run \algone$(S \cup \{x_i\},C_s \cup C^+(x_i,C \setminus C_s))$\;
     \For{each clause $c \in C(x_i,C \setminus C_s)$}{
       run \algone$(S,C_s \cup \{c\})$\;}
}{\eIf{$|S|=k$}{ 
    store $S$\;
  }{($|C_s| \geqslant p$) store a complete assignment satisfying $C_s$, if possible\;
}}
\Return{the best among the solutions stored}\;
\caption{A description of the algorithm \algpSATk.}
\label{alg-two}
\end{algorithm}

The branching tree has depth at most $k+p$ and arity at most $f+1$, so the running time of \algpSATk is $O^*(2^p(f+1)^{k+p})=O^*(p^{O(p)})$, that is~FPT with respect to parameter~$p$, because completing a solution to satisfy all the clauses of~$C_s$ can be done in time~$O^*(2^{|C_s|})$ since \textsc{max sat-$k$} can be solved in~$O^*(2^m)$ time by Lemma~\ref{lem:pSAT}.

We now show the soundness of the algorithm.
Let~$S_0$ be an optimal solution. 
From the root of the branching tree, while it is possible, we follow a branch where all the nodes are in accordance with $S_0$. 
%the variables set to \textit{true} are all in~$S_0$, and the clauses in~$C_s$ are satisfied by~$S_0$. 
Let~$S_c$ be the set of variables set to \textit{true} along this branch (by definition, $S_c \subseteq S_0$), and set $S_n=S_0 \setminus S_c$.
By construction, this branch terminates at $v$, which is either a leaf or a node that deviates from $S_0$.
The leaf case being a special case of $v$ being a deviating node, we assume that $v$ deviates from~$S_0$, i.e., no child of $v$ is in accordance with~$S_0$. 
Let~$x_i$ be the variable chosen at this point by \algpSATk and consider $C_d:=C(x_i,C_u)$ that is the set of clauses not yet in~$C_s$ in which~$x_i$ appears positively or negatively. 
We know that no clause in~$C_d$ is satisfied by~$S_0$. 
Let~$x_j$ be any variable in~$S_n$. 

We claim that $S_h = (S_0 \setminus \{x_j\}) \cup \{x_i\}$ is also optimal and, by a straightforward induction, a solution at the leaves of the branching tree is as good as~$S_0$. 
Setting~$x_j$ to \textit{false}, one loses at most~$d_j(C_u)$ clauses and setting~$x_i$ to \textit{true}, one gains exactly~$d_i(C_u)$ clauses.
Indeed, we recall that no clause of~$C_d$ can be satisfied by~$S_0$, and \emph{a fortiori} by~$S_n$, since otherwise,~$v$ would not deviate from~$S_0$ (if a clause $c \in C_d$ is satisfied by~$S_0$, then the child of~$v$ that commits to satisfy~$c$ remains in accordance with~$S_0$).  
And, by our greedy choice, $d_i(C_u) \geqslant d_j(C_u)$.~\qed
\end{proof}
We may observe that 
%\pSAT{} without negation is \pSC{}, but 
the previous algorithm has a worse time complexity than the already known FPT algorithm for \pSC{}~\cite{blaserpartialcover}. This is rather not surprising since, as we recall, \pSC{} is a particular case of \pSAT{}  corresponding to CNFs without negative literals.

We close this section by recalling that if the length of the clauses is also part of the parameter, the decision version \textsc{sat-$k$} is FPT~\cite{niedermeier06}. In other words, denoting by \textsc{$l$-sat-$k$}, the version \textsc{sat-$k$} where each clause contains at most~$l$ literals, the following holds.
\begin{proposition}\label{alreadyknown}\cite{niedermeier06}
\textsc{$l$-sat-$k$} parameterized by $k+l$ is FPT.
\end{proposition}
%\begin{proof}
%We say that a clause~$c$ is \emph{positive} if the literals appearing in~$c$ are all positive.
%If no positive clause exists in the formula, then setting all the variables to \emph{false} is a truth assignment.
%Thus, we assume that there is at least one positive clause~$c$.
%We branch on the up to~$l$ ways of satisfying~$c$ by setting one of its (positive) literal to \emph{true}.
%The branching tree has depth bounded by~$k$ and arity~$l$, so the running time is~$O^*(l^k)$.~\qed
%\end{proof}
The proof of Proposition~\ref{alreadyknown}, as it is given by Niedermeier~\cite{niedermeier06} works only because one has to satisfy all the clauses. The parameterized complexity of \pSAT{} with respect to $k+l$ still remains unclear and, to our opinion, deserves further investigation.

\section{Some Preliminary Thoughts About an Enhanced Weft Hierarchy: the Counting Weft Hierarchy}\label{weft}

A natural way to generalize any problem~$\Pi$ where one has to find a solution which universally satisfies a property is to define \textsc{partial}~$\Pi$, where the solution only satisfies the property a ``sufficient number of times''. 
In this sense, as mentioned, \pSC{} where one has to cover at least~$p$ elements, generalizes \SC,{} where all the elements must be covered. 
Similarly \pVC{} where one has to find a minimum subset of vertices which covers at least~$p$ edges, generalizes \VC{}, where one has to cover all the edges; yet, \mSAT{}\ generalizes~\SAT{}.
Cai studied the parameterized complexity of such partial problems (and others) in~\cite{cai}.

These partial problems come along with two parameters: the size of the solution, frequently denoted by~$k$ and the ``sufficient number of times'' quantified by~$p$. 
Many of these problems when parameterized by~$k$ are shown to be either~W[1]- or W[2]-hard, but we do not know how to prove a better membership result than the membership to W[P] (note that this is not the case of \pVC, already proved to be W[1]-complete by Guo et al.~\cite{guo2007}).
This is a quite important asymmetry between classical complexity theory as we know it from the literature (see, for example,~\cite{gj,pap,ps}) and parameterized complexity theory.

Showing the completeness of a~W[1]- or a~W[2]-hard problem, would imply that we can count up to~$p$ with a circuit of constant depth and weft~1 or~2. 
By definition of the W-hierarchy, the fact that $k$ \textit{input}-vertices of the boolean circuit can be set to true permits to deal with cardinality constraint problems, but it is not suitable to problems, such as \pSC{}, where both the value and the cardinality of the solution  are constrained. 
We sketch, in what follows, a hierarchy of circuits named \emph{counting weft hierarchy} whose classes are larger than the corresponding ones in the weft hierarchy (W-hierarchy). Basically, in the boolean circuit, we generalize the \emph{and}-vertex to a \emph{counting}-vertex. 

A counting vertex~$C_j$ with in-degree~$i$ where $j \in \{0,\ldots,i\}$ has out-degree~1 and outputs~1 iff at least~$j$ of its~$i$ inputs are~1's. Note that~$C_i$ corresponds to an \emph{and}-vertex and $C_1$ is an \emph{or}-vertex. 
A \emph{counting} circuit is a circuit with some input vertices, counting vertices, negation vertices, and exactly one output vertex. Correspondingly,~CW[$k$] is the class of problems~$\Pi$ parameterized by~$p$ such that there is a constant~$h$ and an FPT algorithm (in~$p$)~\texttt{A}, such that~\texttt{A} builds a counting circuit~$\mathcal C$ of constant depth~$h$ and weft~$k$, and $I \in \Pi$ iff $\mathcal C(I)=1$. 
It can be immediately seen that the counting weft hierarchy has exactly the same definition as the weft hierarchy up to replacing a (boolean) circuit by a (boolean) counting circuit.

Based upon the sketchy definition just above, the following can be proved by just taking the usual circuits for \SC{} and \SAT{} (recall for completeness that for \SC{}, the \textit{input}-vertices are the sets, elements are large \textit{or}-vertices taking as input the sets where they appear, and the \textit{output}-vertex is a large \textit{and}-vertex taking all or-vertices as input) and replacing the corresponding large \textit{and}-vertices by vertices~$C_p$.
\begin{proposition} 
The following inclusions hold for the counting weft hierarchy: both \pSC{} and \pSAT{} are in~CW[2].
\end{proposition}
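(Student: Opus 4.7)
The plan is to exploit the fact that both \SC{} and \SAT{} admit well-known W[2]-membership circuits of constant height in which the top gate is a single large \emph{and} encoding ``every element is covered'' (respectively ``every clause is satisfied''). Replacing that top \emph{and} by the counting gate $C_p$ turns the universal statement into the threshold statement ``at least $p$ elements are covered'' (respectively ``at least $p$ clauses are satisfied''), which is exactly the relaxation defining \pSC{} and \pSAT{}. Since one large gate is swapped for one large gate, the weft of the original circuit (namely $2$) and its constant height are preserved, and the whole construction is trivially polynomial, hence FPT, in the instance.

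Concretely, for \pSC{} I would introduce one input variable $v_S$ per set $S\in\mathcal S$, build for each element $x_j\in X$ the large or-gate $O_j=\bigvee_{S\ni x_j}v_S$ stating ``some selected set contains $x_j$'', and feed all the $O_j$'s into a single $C_p$ at the output; the resulting circuit has height $2$ and weft $2$, and the cardinality constraint ``at most $k$ sets'' is handled, as in the classical W-hierarchy, by restricting to input vectors of Hamming weight $k$. For \pSAT{} the construction is the symmetric one: one input variable per Boolean variable, a bounded \emph{not}-gate on each negative literal occurrence (bounded gates do not contribute to the weft), one large or-gate per clause, and one $C_p$ on top; again this yields a circuit of constant height and weft $2$, and a weight-$k$ satisfying assignment precisely selects $k$ variables whose ``true'' setting satisfies at least $p$ clauses.

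The step that asks for care, rather than a genuinely hard argument, is purely definitional bookkeeping: one has to confirm that $C_p$ is counted as a single unbounded-fan-in gate contributing $1$ to the weft (consistent with the identifications $C_i=\text{\emph{and}}$ and $C_1=\text{\emph{or}}$ made when the cweft hierarchy is sketched) and that the weighted-input convention of the W-hierarchy carries over to the CW-hierarchy, so that the ``at most $k$'' part of the solution is encoded by the weight of the input vector. Once these conventions are fixed, nothing else is needed: the counting gate was designed precisely to absorb, in one unit of weft, the ``at least $p$'' quantification that kept \pSC{} and \pSAT{} outside W[2] to begin with.
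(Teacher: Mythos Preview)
Your proposal is correct and follows exactly the approach the paper itself indicates: take the standard weft-$2$ circuits for \SC{} and \SAT{} and replace the top large \emph{and} gate by the counting gate~$C_p$. You have simply spelled out the construction in more detail than the paper does, but the idea is identical.
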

As mentioned in the introduction, the results by Fellows et al.~\cite{fellows10} which also focus on parallel W-hierarchy with other types of vertices, cannot be used here since the counting vertices are symmetric but not bounded.

\medskip
\noindent \textbf{Acknowledgement.} The pertinent suggestions and comments of two anonymous referees have greatly improved the quality of this paper.

%\section{Conclusion}
%
%We have studied the parameterized complexity of \pSC and \pSAT with respect to pairs of natural instance parameters as~$k$, $\Delta$ (the maximum set-cardinality) and~$f$, the maximum frequency of the ground elements, for the former, or the number of variables set to true in a feasible assignment and the maximum number of occurrences of the variables in the input-formula, for the latter. For this we have used the greediness-for-parameterization technique which is efficient for this kind of multiparameteraztion studies. We have introduced an enhancement of the classical weft hierarchy to try and prove completeness for those cardinality constrained W[$\cdot$]-hard problems. 
%The existence of such complete problems for the new hierarchy is, in our opinion, the major open problem described in this paper.

%\bibliographystyle{plain}
%\bibliography{biblio}

\end{document}